\documentclass[11pt]{article}

\usepackage{tikz}
\usepackage{enumerate} 

\usepackage[ruled]{algorithm2e}

\usepackage[colorinlistoftodos]{todonotes}
\usepackage{algpseudocode}% http://ctan.org/pkg/algorithmicx
\usepackage{amssymb,amsmath, amsthm, amstext, graphics, color}
\usepackage{graphicx}

\usepackage{multicol}
\usepackage{color}
\usepackage[noadjust]{cite}
\usepackage{enumitem}
\usepackage{multirow}
\setlist{nolistsep}
\usepackage{hyperref}
\usepackage{comment}
\usepackage{pgf,tikz}
\usepackage{calrsfs}
\usepackage{float}% http://ctan.org/pkg/float
\usetikzlibrary{arrows,shapes}

\newtheorem{theorem}{Theorem}

\newtheorem{remark}{Remark}

\usepackage{authblk}

\begin{document}

\title{A Dynamically Turbo-Charged Greedy Heuristic for Graph Coloring}

\author{Faisal N. Abu-Khzam}
\author{Bachir M. Chahine}

\affil{Department of Computer Science and Mathematics\\
  Lebanese American University\\
  Beirut, Lebanon
}
%\email{faisal.abukhzam@lau.edu.lb}
%\setcounter{Maxaffil}{0}
%\renewcommand\Affilfont{\itshape\small}

\date{\vspace{-5ex}}
\maketitle

\begin{abstract}
We introduce a dynamic version of the graph coloring problem and prove its fixed-parameter tractability with respect to the edit-parameter. This is used to present a {\em turbo-charged} heuristic for the problem that works by combining the turbo-charging technique with
other standard heuristic tools, including greedy coloring.
The recently introduced turbo-charging idea is further enhanced in this paper by introducing a dynamic version of 
the so called {\em moment of regret} and {\em rollback points}.
Experiments comparing the presented turbo-charging 
algorithm to other (non-turbo-charged) heuristics demonstrate its effectiveness.
Our algorithm often produced results that were either exact or
better than all the other available heuristics.
\end{abstract}

\section{Introduction}

A $k$-vertex coloring of an undirected graph $G=(V,E)$ is a mapping from $V$ to $\{1,\cdots k\}$ that assigns different values, or colors, to every pair of adjacent vertices. The minimum value of $k$ for which such a mapping exists is the {\em chromatic number} of $G$, often denoted by $\chi(G)$.
The corresponding Graph Coloring problem (AKA. Chromatic Number problem) consists of finding $\chi(G)$ or, in its search version, a vertex coloring that assigns $\chi(G)$ colors to a given graph $G$.

Graph Coloring is among the most studied graph theoretic problems, perhaps because of the famous Four Color Theorem \cite{fourColoring} as well as its wealth of applications in various domains especially scheduling and frequency assignment problems \cite{DBLP:books/sp/Lewis16}.

On the other hand, a dynamic problem is one whose input is assumed to have changed after some initial ``satisfactory'' solution was found. Formally, a dynamic version of a problem $X$ takes as input a quintuple $(I, I', S, k, r)$ where $I$ and $I'$ are instances of $X$ and $S$ is a solution to $I$, not necessarily optimal. 
The instances $I$ and $I'$ are at a given {\em edit distance} $k$, termed the {\em edit-parameter}. The question posed is whether a solution $S'$ to $I'$, also not necessarily optimal, can be obtained such that the {\em Hamming distance} between $S$ and $S'$ is bounded by $r$, which we refer to henceforth as the {\em increment-parameter}.
This notion of a dynamic problem was introduced by Downey et al. in the context of Dominating Set \cite{downey2014dynamic}. It was preceded (and inspired) by the work of Hartung and Niedermeier on the List Coloring problem  \cite{HartungN13}.

Capitalizing on the above mentioned work and the more recent work in \cite{abu2015parameterized} on dynamic versions of a number of domination and vertex-deletion problems, we study a dynamic version of the Graph Coloring problem (henceforth DGC). In particular, we show that Dynamic Graph Coloring is fixed-parameter tractable with respect to the edit-parameter and para-$NP$-hard with respect to the increment parameter.
Furthermore, we show that our fixed-parameter algorithm for DGC can be used to obtain a turbo-charged heuristic for the Graph Coloring problem. 

In this paper, the Turbo-Charging idea is further improved by introducing a dynamic version of the {\em moment of regret} notion. This also yields a dynamic version of {\em rollback}.
The effectiveness of this approach is evaluated by comparing our experimental results with those of some of the most known graph coloring heuristic algorithms, and for which the published results include tests on the benchmark graphs from the DIMACS Graph Coloring Challenge.
Our experiments show that the turbo-charged heuristic
outperforms known Graph Coloring heuristics and can sometimes achieve results that are close to optimal.

\section{Preliminaries}

Throughout this paper we consider only simple undirected graphs, i.e., no loops and no multiple edges. Given a graph $G = (V,E)$, $V$ and $E$ (or $V(G)$ and $E(G)$) denote the set of vertices and edges in $G$, respectively. The degree, $deg(v)$, of a vertex $v$ is the number of edges incident on $v$. 
%The maximum and minimum degrees of a vertex in $G$ are denoted by $\Delta(G)$ and $\delta(G)$ respectively. 
A vertex cover in a graph $G$ is a subset $S$ of $V(G)$ such that every edge of $G$ has at least one endpoint in $S$. 

The Graph Coloring problem is among the first few classical problems shown to be $NP$-hard in \cite{npHard1979} and \cite{npHard1974}. It is also known to be para-$NP$ hard when parameterized by the number of colors in a target solution \cite{downey1999parameterized}. In other words, when the objective is to color the input graph with at most $k$ colors, the problem is ``hard'' to solve in $O(n^k)$ time.

The current asymptotically-fastest exact algorithm solves Graph Coloring in $O(2^n)$, where $n$ is the number of vertices in the input graph \cite{Bjorklund2009}. %This is interesting merely for theoretical purposes.
There are a number of known heuristic methods that aimed at possibly coping with the problem's computational intractability, for practical purposes. Moreover, there are a few meta-heuristic methods to which our turbo-charging method is not applicable. We will thus refrain from providing an overview of the meta-heuristic approaches but we provide comparison with the most effective tabu search for which experimental results are known on the instances considered in our experiments \cite{simAnn}. 

The most known heuristic method is the greedy algorithm which takes an ordering of the vertices and, for each vertex, assigns the smallest color not yet assigned to a neighboring vertex. An improved method uses a heuristic which changes the ordering of nodes and then uses the same greedy approach.
The most used ordering is Largest Degree First (LF) \cite{gLarFst} and a random selection as tie breaker when two nodes have the same largest degree.

Another known heuristic approach is based on the following observation: if the next greedy coloring step requires the assignment of a new color to a vertex, then attempt to swap colors in the graph in order to free one of them for that vertex and then assign to it the smallest possible color.
This old ``recoloring'' approach appeared in \cite{gInter}. It is known as the Greedy-With-Interchange method.

Finally, the Range Compaction Heuristic \cite{rangecomp2003} is another (more recent) heuristic that can be considered as an iterated-greedy approach. It starts with a proper coloring and iteratively tries to narrow the range of colors used by up-coloring low-valued colors and down-coloring high-valued ones using some randomized limits. This method has some randomization built into it and works with re-coloring. The main objective is a time for quality trade off.

We shall use a simple greedy approach that rebuilds the input graph by adding one edge at a time. Initially one color is assigned to all vertices, then colors are added when needed. The order of edge addition is obviously the key to effective coloring (by which we mean assigning a smallest-possible number of colors). Our choice of this naive method is affected by applicability of the turbo-charging method as described in \cite{Abu-KhzamCESW17, downey2014dynamic}. Despite the simplicity of the base heuristic method, its turbo-charged variant proved to be most effective when compared to the known heuristics. 

\section{Dynamic Graph Coloring}

The parameterized Dynamic Graph Coloring is defined formally as follows.

\vspace{5pt}

\noindent
{\bf Dynamic Graph Coloring (DGC)}\\
\underline{\bf Input:}  A graph $G = (V, E)$ with a $c$-coloring $C$; a graph $G' = (V, E')$ obtained from $G$ with edit-distance 
$d_e(G, G') \leq k$. 
\newline \underline{\bf Parameters:} \space $k$ and/or $r$.
\newline \underline{\bf Question:} \space Does there exist a $(c+r)$-coloring of $G'$?

\vspace{10pt}

An instance of DGC would be a quintuple $(G,G',C,k,r)$ and it is assumed that the sought coloring $C'$ is obtained by at most $r$ re-coloring operations. This parameterized re-coloring approach appears to be similar to the notion of parameterized reconfiguration (see \cite{naomi} for more details) except that we start by an improper coloring of $G'$ and we want to employ at most $r$ (local) changes to obtain a proper coloring. 

\begin{remark}
\label{r-bound}
The increment parameter is often defined as the Hamming distance between the initial solution (for $G$) and the sought solution. This is based on the assumption that any change to the initial solution has a cost. In our case, changing a color or adding a new color would have the same cost. Therefore whenever we have a conflict, instead of solving it by re-coloring the vertices of the graph, it would be better to just add a new color. 
This justifies the way we pose the question in our above formulation of DGC.
\end{remark}

\begin{theorem}
Dynamic Graph Coloring is fixed-parameter tractable with respect to the edit-parameter. 
\end{theorem}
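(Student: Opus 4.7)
My plan is to reduce the problem to a small vertex-cover computation on an auxiliary graph. The key structural observation is that $C$ remains a proper coloring on any subgraph of $G$, so edge deletions among the edits from $G$ to $G'$ can never introduce conflicts; hence every edge $(u,v)\in E'$ with $C(u)=C(v)$ must be one of the at most $k$ added edges. I would let $H$ denote the subgraph of $G'$ whose edges are exactly these conflict edges, and observe that $|V(H)| \leq 2|E(H)| \leq 2k$.

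The central claim I will prove is that $(G,G',C,k,r)$ is a YES-instance if and only if the vertex cover number $\tau(H)$ is at most $r$. For necessity, take any valid $(c+r)$-coloring $C'$ of $G'$ with $d_H(C,C') \leq r$ and let $D = \{v : C'(v)\neq C(v)\}$; every conflict edge must have at least one endpoint in $D$ (otherwise its two endpoints still carry the original equal colors in $C'$), so $D$ is a vertex cover of $H$ of size at most $r$. For sufficiency, given a vertex cover $S$ of $H$ with $|S| \leq r$, I would construct $C'$ by assigning each vertex of $S$ its own \emph{distinct fresh} color drawn from $\{c+1,\ldots,c+|S|\}$ and leaving the other colors untouched; fresh colors cannot coincide with any color already present on a neighbor, so no conflict of $G'$ survives and no new conflict is created. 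The coloring uses at most $c+|S|\leq c+r$ colors and differs from $C$ on exactly $|S|\leq r$ vertices. Remark~\ref{r-bound} is precisely the justification that this ``spend a new color rather than reshuffle old ones'' strategy is without loss of generality.

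Given the equivalence, the algorithm simply computes $\tau(H)$ on a graph with at most $2k$ vertices. This can be done in $O^*(2^{2k})$ time by brute-force enumeration of subsets of $V(H)$, or faster by invoking any standard FPT Vertex Cover routine. Comparing the result against $r$ and, in the YES case, emitting the fresh-color construction above, yields an algorithm running in $f(k)\cdot\mathrm{poly}(n)$ time, establishing fixed-parameter tractability with respect to the edit-parameter.

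The main subtlety I anticipate is the sufficiency direction: one must simultaneously respect the bound $c+r$ on the number of colors and the bound $r$ on the Hamming distance. The fresh-color construction makes both constraints align neatly, because each vertex in $S$ contributes exactly one new color \emph{and} exactly one coordinate of disagreement with $C$, so both budgets are exhausted at the same rate $|S| \leq r$. Beyond this bookkeeping, the proof is essentially immediate once the reduction to Vertex Cover on $H$ is in hand.
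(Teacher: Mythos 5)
Your proposal is correct and follows essentially the same route as the paper: both arguments observe that all conflicts lie on the at most $k$ newly added edges, reduce the question to finding a vertex cover of size at most $r$ in the conflict subgraph $H$ on at most $2k$ vertices (with Remark~\ref{r-bound} justifying that assigning fresh colors to the cover vertices suffices), and then invoke an FPT Vertex Cover routine. Your write-up is somewhat more explicit about the two directions of the equivalence, but the underlying idea is the same.
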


\begin{proof}
Let $(G,G',C,k,r)$ be an instance of Dynamic Graph Coloring. Obviously, at most $2k$ vertices are affected by the $k$ edge additions. According to the previous remark, $r$ would be at most $2k$ which is the maximum number of additional needed colors. 
Observe, however, that at most $k$ of the affected vertices need a new color. In fact, the subgraph $H$ induced by the newly added $k$ edges has a minimal vertex cover of size at most $k$ and by Remark \ref{r-bound}, it would be enough to add colors to the vertices forming a minimum vertex cover since their deletion eliminates all conflicts. Note (again) that we seek to minimize the number of changes made to the coloring $C$. Finding a ``minimum'' vertex cover of size at most $k$ is solvable in $O^{*}(1.2738^k)$ time using the fixed-parameter algorithm of Chen et al. \cite{Chen10}. Of course we would have to apply the latter algorithm at most $\log{k}$ times.
\end{proof}

On the other hand, Dynamic Graph Coloring parameterized only by the increment parameter, $r$, is not likely to fall in the class FPT. 

\begin{theorem}
Dynamic Graph Coloring parameterized by the increment parameter is para-$NP$ hard.
\end{theorem}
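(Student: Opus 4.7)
The plan is to reduce directly from classical $3$-Coloring, which is NP-hard, to DGC with the increment parameter pinned to a small constant. Recall that para-NP-hardness with respect to $r$ only requires that DGC remain NP-hard for some fixed value of $r$, while the remaining components of the input (including $G$, $G'$, $C$, $c$, and $k$) are free to vary with the instance. The hope is therefore to smuggle the full combinatorial difficulty of $3$-Coloring into the construction of $G'$, while keeping the initial coloring trivial so that the ``$+r$'' slack in the target color budget just buys us exactly three colors in total.

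Given an arbitrary graph $H$ as an instance of $3$-Coloring, the reduction outputs the DGC instance $(G, G', C, k, r)$ defined as follows: $G$ is the empty graph on the vertex set $V(H)$; $C$ is the (trivially proper) $1$-coloring of $G$ that assigns color $1$ to every vertex, so that $c = 1$; $G' := H$; $k := |E(H)|$ is the actual edit distance, which is permitted to be arbitrary since $k$ is not the parameter being fixed; and $r := 2$. The construction is clearly polynomial-time, and with $c = 1$ and $r = 2$ the DGC question ``does $G'$ admit a $(c+r)$-coloring?'' becomes literally ``does $H$ admit a $3$-coloring?'', so correctness is immediate in both directions.

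Since $3$-Coloring is NP-hard and $r$ is frozen at $2$ throughout the reduction, DGC is NP-hard under the restriction $r = 2$, which is exactly para-NP-hardness with respect to the increment parameter. I do not see any substantive obstacle in this argument; the only point worth verifying explicitly is that the DGC formulation invoked in the theorem asks solely for the existence of some $(c+r)$-coloring of $G'$, rather than imposing a separate Hamming-distance constraint tying $C'$ to $C$ (cf.\ Remark~\ref{r-bound}). If a Hamming-distance version were intended instead, the same idea still works: one would replace the all-ones initial coloring by a coloring of $G$ (still edgeless) that already uses $3$ colors distributed arbitrarily on $V(H)$, and set $r := |V(H)|$ to allow full re-coloring freedom, noting that this changes only the trivial ``budget'' bookkeeping and not the fact that $r$ is handed to the solver independently of the $3$-Coloring difficulty hidden in $G'$.
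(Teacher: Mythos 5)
Your reduction is correct and is essentially the same construction as the paper's: take $G$ to be the edgeless graph on $V(H)$ with the trivial one-coloring, let $G'$ be the hard instance, and let the ``$+r$'' slack supply exactly the target number of colors. The only cosmetic difference is that you specialize to $3$-Coloring so that $r$ is frozen at $2$ (which is the cleanest way to certify para-$NP$-hardness), whereas the paper reduces from general Graph Coloring with increment $r-1$ and invokes the fact that coloring parameterized by the number of colors is not in XP unless $P=NP$; both arguments establish the theorem.
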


\begin{proof}
Since the edit-distance $d_e(G, G')$ is arbitrary in this case, we can reduce the Graph Coloring problem to its dynamic version as follows. 
Let $(G,r)$ be an arbitrary instance of Graph Coloring. We construct an equivalent instance $(G_1,G_2, C, k, r-1)$ of Dynamic Graph Coloring by letting $G_1$ be an edge-less graph on $|V(G)|$ vertices, $G_2 = G$, $C$ is a coloring that assigns a unique color to all vertices of $G_1$, and $k = |E(G)|$. Obviously, $G$ is $r$-colorable if and only if we can find a coloring $C'$ such that $d(C,C')\leq r-1$. The theorem now follows from the fact that, unless $P=NP$, Graph Coloring is not even in the class $XP$ \cite{downey1999parameterized}.
\end{proof}

\section{A Dynamic Turbo-Charging Algorithm}

The {\em turbo-charging} technique introduced by Downey et al in \cite{downey2014dynamic} uses a simple greedy heuristic to apply the fixed-parameter algorithm for the dynamic version of the Dominating Set problem.
In our approach, we will be working on a ``turbo-charged'' greedy heuristic to apply the fixed-parameter algorithm for the dynamic version of Graph Coloring with some additional improvements to the turbo-charging method.

As mentioned earlier, our algorithm starts off with an edge-less graph by assigning color 1 to all the vertices (we may also start with a colored graph but this could possibly reduce the effectiveness of our approach). As we add the edges of the input graph, we add at most one color at a time when necessary.
In fact, edge addition is best performed in a way that avoids color-conflict as much as possible. For this purpose a greedy edge-ordering that tries to minimize conflicts between consecutive edges is employed. 
%In addition to the above parameters, 
%The edge-set of the graph needs to be sorted in a non-conflicting order. This will make the algorithm run faster on large instances.
Before we fully describe the turbo charging part of our algorithm, we introduce two new parameters. 

%First, the {\em dynamic moment of regret}, which is when the solution exceeds the expected result or budget and it is no longer feasible/desired. Usually it is set to a static value (as defined in \cite{downey2014dynamic}), but we henceforth use a new dynamic value which will be dependent on the current state of the algorithm in terms of edge number and chromatic number.

%The {\em dynamic rollback point} is our second enhancement. Usually when a regret moment is reached, a rollback is proceeded until some pre-set point. We introduce a new dynamic value which also takes the state of the algorithm into consideration and the rollback will be according to a dynamically calculated point each time the moment of regret is reached.

\subsection{Dynamic moment of regret.} 

The {\em moment of regret} is reached when the solution exceeds the expected result or budget and it is no longer feasible/desired. Usually it is set to a static value (as defined in \cite{downey2014dynamic}), but we henceforth use a new dynamic value which will be dependent on the current state of the algorithm in terms of edge number and chromatic number.

At each point in time we keep track of the number of edges added to the graph and the number of colors added. Taking the total number of edges remaining to be added to the graph into consideration, a (dynamic) moment of regret parameter is set to a fraction of the number of colors added per ``group of edges'' added. For example, we might be fine with adding 5 colors after adding 100 edges, while adding 3 colors after adding 10 edges might not be acceptable. Therefore our moment of regret is a fraction of the number of colors added per number of edges added.  

Formally, let $c$ be the current total number of assigned colors and let $n_i$  be the number of edges added after assigning color $i$. Let 

\begin{center}

$m = Min_{1\leq i \leq c}\{\frac{n_i}{c - c_i}\}$.

\end{center}

Moreover, let $k$ be the number of colors assigned by the best-known heuristic. Then our moment of regret is reached when $m < |E|/k$.

\subsection{Dynamic rollback point.} 

When a {\em dynamic moment of regret} is reached, we need to rollback to a specific point where the solution was ``acceptable.'' In other words we take back a number of edge additions and we restore the coloring saved up to that point.
To determine this number, we keep track of the variation of the graph coloring function. In particular we keep track of an interval where the function is slowly varying or varying at an acceptable rate. We call this interval the {\em stable interval}. The right endpoint of the stable interval is called the {\em dynamic rollback point} and it is updated also after each application of the turbo-charging subroutine.  
%(i.e., we reached this point after a large number of edges were removed and the graph coloring was slightly or not being modified
Then, instead of undoing a static number of edge additions (as in the original turbo-charging method), we take back edge additions until we reach the dynamic rollback point. 

%which is the (right) endpoint of our stable interval. where the coloring of the graph is not increasing rapidly according to our moment of regret measure (i.e., we reached this point after a large number of edges were removed and the graph coloring was slightly or not being modified, we set this point as our {\em dynamic rollback point} and proceed with edge removal until this point is reached.

%Knowing that the state of the algorithm (number of edges and colors) is saved on each iteration, 
%we do not need to rollback to the start point of the algorithm, contrarily, 
%we need to track back where this solution started to be not feasible and rollback until just before this point.

%We start with the rollback process by proceeding with the edge removal and track back the saved states of our algorithm. 

\subsection{A turbo-charged greedy approach}

Whenever a new edge is added, either the edge does not create any coloring conflict and no new color need to be assigned, or the edge addition creates a conflict and a new color is needed.  
In both cases, the algorithm proceeds in a greedy approach, and adds the least possible color available that does not create a conflict.
Note that our algorithm assumes a new color is needed when it is not possible to change the color of one of the two endpoints. 
%We do not look further to a recoloring of other vertices. This latter strategy might yield an improvement of our algorithm at the cost of increasing the running time.

After each color-addition, the algorithm checks if the defined dynamic regret point is reached.
If not, it proceeds with the next edge addition. 
Otherwise it performs a dynamic rollback option, removing the previously added edges until some dynamic backup point, then the (fixed-parameter) dynamic problem subroutine (DGC\_FPT) is called to determine if a smaller solution can be found.
The subroutine will take both graphs before and after the dynamic rollback process, the current coloring and the current edge index, in addition to the current coloring reduced by one as input parameters in an attempt to reduce the current coloring by at least one color.
A pseudocode of this algorithm is given below.

\vspace{10pt}

\begin{algorithm}[htb!]
    \begin{algorithmic}[1]
        \Procedure{DYN TURBO-Charging}{}   
        %\Comment Graph $G$ 
        %and parameter $k$
      %  \Comment Returns a matching $M$

\State Initialize $C(v) \leftarrow 1$ for each vertex $v$ of $G$ 

\State Set the dynamic moment\_of\_regret and rollback\_point values.

\State Sort the edges of $G$ in a non-conflicting order.
\State $c\gets 1$;
\State $i\gets 0$;

\State {\bf do}

\State \hspace{1cm} $C_i \gets C$
\State \hspace{1cm} $i\gets i+1$;
\State \hspace{1cm} add edge $e_i=uv$ to $G$;

\State \hspace{1cm} \textbf{if} $C(u) = C(v)$ \textbf{then} %\Comment edge $e_i$ creates conflict
\State \hspace{1.5cm} \textbf{if} $c' \leq c$ is not assigned to a neighbor of $u$ \textbf{then}
\State \hspace{2cm} $C(u) \gets c'$;
\State \hspace{1.5cm} \textbf{else if}
$c' \leq c$ is not assigned to a neighbor of $v$ \textbf{then}
\State \hspace{2cm} $C(v) \gets c'$;
%\comment edge $e_i$ creates conflict 
\State \hspace{1.5cm} \textbf{else}
\State \hspace{2cm} $c\gets c+1;$
\State \hspace{2cm} $C(u) = c$; \Comment add minimum possible color;

%\State \hspace{1cm} \textbf{else}
%\State \hspace{1.5cm} choose minimum available color;
\State \hspace{1cm} \textbf{end if}
\State \hspace{1cm} \textbf{if} is\_dyn\_moment\_of\_regret \textbf{then}
\State \hspace{1.5cm} $j = i$ $-$ dyn\_rollback\_value
\State \hspace{1.5cm} $G' \gets G - \{e_{j+1}, \cdots e_i\}$;
%[dyn\_rollback\_point edges];
\State \hspace{1.5cm} $G'' \gets $ \textbf{DGC\_FPT}($G',G,C_j,i,c-1-|C_j|$)
\item \hspace{2cm} \textbf{if} c(G'') \textless \space c(G) \textbf{then}
\State \hspace{2.5cm} $G \gets G'' $;
\State \hspace{2.5cm} $c \gets c(G'') $;
\State \hspace{2cm} \textbf{end if}
\State \hspace{1cm} \textbf{end if}
\State \textbf{while} (Not all the edges of the graph are added); 
\State \textbf{Return} $c$ as the final number of colors assigned to $G$;
\EndProcedure
    \end{algorithmic}
    \caption{Dynamic Turbo-charged Heuristic for Graph Coloring}
    \label{alg:pmm}
\end{algorithm}

\vspace{10pt}
 
We observed that, after each moment of regret, %that the greedy approach leads to, 
the FPT subroutine often decreases the coloring by at least one color. %Therefore, turbo-charging the greedy approach could enhance the solution at each regret point reached.
%Once a new better solution is found, it replaces the previous one and the algorithm will pursue with the edge addition.
As shown in the previous section, the running time of the subroutine is in $O^*(4^k)$, but in our case $k$ could be small and the number of times we call the FPT subroutine is bounded above by (a fraction of) the number of colors.
Despite the additional processing time, 
%Even thought the algorithm will be adding more processing time to the greedy approach when we reach the moment of regret, knowing that 
we almost always have a better coloring, which is the main objective of the turbo-charging method. 
%Moreover, the time added is only couple of seconds since we are partially coloring the graph.

In the next section, the reported preliminary experimental results show how the dynamic turbo-charging method, applied to the above simple greedy heuristic, was consistently effective by enhancing all the known results obtained using heuristic methods.
%in a small amount of time.

\section{Experimental Analysis}

%Our algorithm was implemented in the Java programming language and tested on a machine of the Windows 10 operating system with a CPU of 2.5 GHz, Intel Core i7, and 4 GB memory.  

Our algorithm was tested on several benchmarks from the DIMACS graph coloring instances. Due to the successive applications of our fixed-parameter (turbo-charging) subroutine, the algorithm is slower than the usual efficient heuristic methods but it takes only a few seconds in general. 
We compare the results (number of colors) obtained by our algorithm to those obtained by the most known heuristics. 
For this purpose, a subset of the DIMACS graph coloring benchmarks was chosen to maximize the overlap with the published
results of heuristic methods.
The obtained results are reported in the tables below. The values for the chromatic number are placed in parentheses when they are the lowest values reported in the literature but not proven to be the minimum.

Table \ref{tab:tcRcc} below provides a comparison between the results obtained by our turbo-charged heuristic, referred to by  ``DYN-TC,'' and the Range Compaction Coloring heuristic (RCC) from \cite{rangecomp2003}.
As the table shows, our turbo-charged heuristic either matches or obtains better results than the RCC method on all the test cases.

%\vspace{2 mm}
\begin{table} [htb!]
\parbox{.85\linewidth}{
\centering
\begin{tabular}
{|c|c|c|c|}
\hline
\multirow{2}{*}{Graph $G$} & \multirow{2}{*}{$\chi(G)$} & \multirow{2}{*}{DYN-TC} & %
    \multirow{2}{*}{RCC} \\
 & & & \\
\hline
 DSJC125.5 & (17) & 19 & 20 \\
\hline
 DSJC250.5 & (28) & 30 & 33 \\
 \hline
 DSJC500.5 & (48) & 55 & 56 \\
 \hline
 DSJC1000.5 & (84) & 93 & 98 \\
 \hline
 DSJR500.1 & 12 & 12 & 12 \\
 \hline
 DSJR500.5 & (123) & 126 & 131 \\
 \hline
 R125.1 & 5 & 5 & 5 \\
 \hline
 R125.5 & 36 & 36 & 38 \\
 \hline
 R250.1 & 8 & 8 & 8 \\
 \hline
 R250.1c & 64 & 64 & 64 \\
 \hline
 Flat300\_20\_0 & 20 & 24 & 21 \\
 \hline
 Flat300\_26\_0 & 26 & 31 & 37 \\
 \hline
 Flat300\_28\_0 & 28 & 33 & 37 \\
 \hline
 Flat1000\_50\_0 & 50 & 92 & 96 \\
 \hline
 Flat1000\_60\_0 & 60 & 92 & 96 \\
 \hline
 Flat1000\_76\_0 & 76 & 93 & 97 \\
\hline
% etc. ...
\end{tabular}
\caption{Dynamic Turbo-charging versus RCC.}
\label{tab:tcRcc}
}
\end{table}

Table \ref{tab:tcGreedy} shows the comparison between our approach and the (folklore) greedy heuristic. The reported results show tremendous improvements in terms of number of colors obtained. 

\begin{table} [H]
\parbox{.85\linewidth}{
\centering
\begin{tabular}
{|c|c|c|c|}
\hline
\multirow{2}{*}{Graph $G$} & \multirow{2}{*}{$\chi(G)$} & \multirow{2}{*}{DYN-TC} & %
    \multirow{2}{*}{Greedy} \\
 & & & \\
\hline
 DSJC125.5 & (17) & 19 & 26 \\
\hline
 DSJC250.5 & (28) & 30 & 43 \\
 \hline
 DSJC500.5 & (48) & 55 & 72 \\
 \hline
 DSJC1000.5 & (84) & 93 & 127 \\
 \hline
 DSJR500.1 & 12 & 12 & 15 \\
 \hline
 DSJR500.5 & (123) & 126 & 143 \\
 \hline
 R125.1 & 5 & 5 & 5 \\
 \hline
 R125.5 & 36 & 36 & 44 \\
 \hline
 R250.1 & 8 & 8 & 9 \\
 \hline
 R250.1c & 64 & 64 & 76 \\
 \hline
 Flat300\_20\_0 & 20 & 24 & 47 \\
 \hline
 Flat300\_26\_0 & 26 & 31 & 45 \\
 \hline
 Flat300\_28\_0 & 28 & 33 & 46 \\
 \hline
 Flat1000\_50\_0 & 50 & 92 & 126 \\
 \hline
 Flat1000\_60\_0 & 60 & 92 & 125 \\
 \hline
 Flat1000\_76\_0 & 76 & 93 & 122 \\
\hline
% etc. ...
\end{tabular}
\caption{Dynamic Turbo-charging versus Greedy.}
\label{tab:tcGreedy}
}
\end{table}

%In the following tables, ``DYN-TC'' refers to our dynamically turbo-charged heuristic approach while ``RCC'' is the Range Compaction Coloring heuristic from \cite{rangecomp2003} and ``Greedy'' corresponds to the (folklore) greedy algorithm, also known as the ``sequential'' algorithm. We also added results obtained by "TABU-S" which is the Tabu search Meta-Heuristic algorithm from \cite{HdW87}, modified and improved in \cite{CL96}.

%\vspace{2 mm}

Table \ref{tab:tcTabu} presents a comparison between our approach and the Tabu-Search meta-heuristic, referred to by ``TABU-S,'' introduced in \cite{HdW87} and later improved in \cite{CL96}.
Despite the fundamental difference between the two methods, especially because of the fact that turbo-charging does not seem applicable to meta-heuristics, our approach proves to be competitive, often reducing the resulting number of colors. Moreover, while we could not perform a
comparison between running times, our turbo-charging method showed a smooth increase in running time when the graph size increases, while the TABU-S’s reported running time increase in a more abrupt manner as the graph size gets bigger.

\begin{table} [H]
\parbox{.85\linewidth}{
\centering
\begin{tabular}
{|c|c|c|c|}
\hline
\multirow{2}{*}{Graph $G$} & \multirow{2}{*}{$\chi(G)$} & \multirow{2}{*}{DYN-TC} & %
    \multirow{2}{*}{TABU-S} \\
 & & & \\
\hline
 DSJC125.5 & (17) & 19 & 19 \\
\hline
 DSJC250.5 & (28) & 30 & 31 \\
 \hline
 DSJC500.5 & (48) & 55 & 53 \\
 \hline
 DSJC1000.5 & (84) & 93 & 93 \\
 \hline
 DSJR500.1 & 12 & 12 & 12 \\
 \hline
 DSJR500.5 & (123) & 126 & 126 \\
 \hline
 R125.1 & 5 & 5 & 5 \\
 \hline
 R125.5 & 36 & 36 & 39 \\
 \hline
 R250.1 & 8 & 8 & 8 \\
 \hline
 R250.1c & 64 & 64 & 65 \\
 \hline
 Flat300\_20\_0 & 20 & 24 & 26 \\
 \hline
 Flat300\_26\_0 & 26 & 31 & 34 \\
 \hline
 Flat300\_28\_0 & 28 & 33 & 33 \\
 \hline
 Flat1000\_50\_0 & 50 & 92 & 90 \\
 \hline
 Flat1000\_60\_0 & 60 & 92 & 91 \\
 \hline
 Flat1000\_76\_0 & 76 & 93 & 91 \\
\hline
% etc. ...
\end{tabular}
\caption{Dynamic Turbo-charging versus TABU Search.}
\label{tab:tcTabu}
}
\end{table}

Knowing that the previously stated heuristics do not tackle many instances, we proceed to comparing the results of our algorithm against the optimal solutions especially on DIMACS graphs where we could not find published results obtained by previous heuristics. Table \ref{tab:tcBest} below shows the results of the comparison. Obviously, the number of colors obtained by our algorithm are always close or equal to the best known solutions. 

\begin{table} [H]
\parbox{.85\linewidth}{
\centering
\begin{tabular}
{|c|c|c|}
\hline
\multirow{2}{*}{Graph $G$} & \multirow{2}{*}{$\chi(G)$} & \multirow{2}{*}{DYN-TC} \\
 & & \\
\hline
 DSJC1000.1 & (20) & 20 \\
\hline
 LE450\_25C & 25 & 25 \\
 \hline
 LE450\_25D & 25 & 25 \\
 \hline
 R1000.1C & (98) & 98 \\
 \hline
 C2000.5 & (146) & 150 \\
 \hline
 C4000.5 & (260) & 266 \\
 \hline
 queen11\_11.col & (11) & 11 \\
 \hline
 queen12\_12.col & (?) & 14 \\
 \hline
 queen13\_13.col & (13) & 15 \\
 \hline
 zeroin.i.1.col & (49) & 49 \\
 \hline
 zeroin.i.2.col & (30) & 33 \\
 \hline
 zeroin.i.3.col & (30) & 33 \\
 \hline
 miles250.col & 8 & 8 \\
 \hline
 miles500.col & 20 & 21 \\
 \hline
 miles1000.col & 42 & 45 \\
 \hline
 miles1500.col & 73 & 75 \\
\hline
% etc. ...
\end{tabular}
\caption{Dynamic Turbo-charging versus the stated best-known solutions reported at DIMACS.}
\label{tab:tcBest}
}
\end{table}

The results in the above tables show that, on average, our heuristic improves the current minimum known estimate of the chromatic number of almost all input instances when compared to other heuristics. It is interesting to note that, despite the iterative application of our dynamic fixed-parameter algorithm, the overall running time of our algorithm is increasing in a conditioned and proportional manner depending on the number of vertices (unlike TABU-S for example, which exhibits a sudden and somehow unexpected increase in running time). As mentioned above, our running times are higher since we are employing a fixed-parameter algorithm at certain time intervals. Despite this fact, our algorithm was computing a better result on most of the test cases and sometimes we obtain close-to-optimal (if not optimal) results with better running times. A notable result is the new best-known solution we report in Table \ref{tab:tcBest} for the DIMACS instance named queen12\_12.col.

Finally, we note again that a similar approach that inspired the work on dynamic problems was presented in \cite{HartungN13} for the List Coloring problem. In fact, an enhanced heuristic algorithm for Graph Coloring was presented and a few experiments were reported. We conducted experiments on the same set of graphs used in \cite{HartungN13}. While our Turbo-charging algorithm is notably slower due to the repetitive use of our FPT subroutine at each moment-of-regret, it can sometimes reduce the number of colors further, as shown in the below table.

\begin{table} [htb!]
\parbox{.85\linewidth}{
\centering
\begin{tabular}
{|c|c|c|}
\hline
{Graph $G$} & {Search-Tree $k$} & {DYN-TC $k$} \\
\hline
 ash608GPIA & 5 & 5 \\
\hline
 DSJC1000.1 & 25 & 20 \\
 \hline
 DSJC500.1 & 15 & 15 \\
 \hline
 latin\_square\_10 & 117 & 116 \\
 \hline
 le450\_15a & 16 & 16 \\
 \hline
 qg.order40 & 41 & 41 \\
 \hline
 queen16\_16 & 19 & 19 \\
 \hline
 school1\_nsh & 23 & 21 \\
 \hline
 wap03 & 50 & 50 \\
\hline
% etc. ...
\end{tabular}
\caption{Dynamic Turbo-charging versus the incremental list-coloring method.}
\label{tab:tcBest}
}
\end{table}

\section{Conclusion}

Capitalizing on recent work on turbo-charging heuristics, this paper has presented a novel turbo-charged heuristic for graph coloring that is based on proving that Dynamic Graph Coloring is fixed-parameter tractable with respect to the edit parameter.
%Our turbo-charging method works by using the corresponding fixed-parameter algorithm. 
We evaluated our turbo-charged heuristic on a broad range of DIMACS benchmark graphs and compared results with the most known heuristics on a selected number of instances. Preliminary results reported in this paper show a consistent improvement over the greedy heuristics in addition to occasional improvement over the most known meta-heuristic.
%At this stage, our code is a proof-of-concept implementation that we believe can be enhanced considerably. Moreover, the running time can be further improved if an improved fixed-parameter algorithm for DGC is found and used as turbo-charging subroutine.

We have introduced an enhanced dynamic turbo-charging approach that consists of varying the moment of regret threshold based on problem specific constraints. It would be interesting to study the effect of this new approach on other turbo-charged heuristics such as those studied in \cite{Abu-KhzamCESW17} as well as other problems for which an FPT dynamic version exists, such as those studied in 
\cite{downey2014dynamic}, \cite{abu14}, \cite{abu2015parameterized} and \cite{Krithika2018}.
%
%Recently, a dynamic version of Red Blue Dominating Set was studied in \cite{abu19}. This implicitly extends to dynamic versions of Set Cover and Hitting Set. It would also be interesting to study the connected version of the problem (introduced in \cite{abu11}) and to apply the turbo-charging method to the corresponding domination/covering problems.  

Finally, we have studied for the first time a parameterized dynamic version of a graph partitioning problem. Previous work has focused mainly on problems whose solutions are subsets of the respective sets of vertices in given input graphs. Our fixed-parameter algorithm for Dynamic Graph Coloring can be easily adapted to solve Dynamic Clique Cover and to obtain a turbo-charged heuristic for the problem. It would be interesting to study the application of this turbo-charging method to similar graph partitioning problems. In fact, a dynamic version of the Cluster Editing problem (CE) was recently introduced and studied in \cite{LMNN2018}. It would also be interesting to extend this work to the multi-parameterized version introduced in \cite{abu17} and to possibly apply our enhanced turbo-charging method to such graph partitioning problems.

\bibliographystyle{abbrv}
\bibliography{references}

\begin{thebibliography}{10}

\bibitem{abu17}
F.~N. Abu{-}Khzam.
\newblock On the complexity of multi-parameterized cluster editing.
\newblock {\em J. Discrete Algorithms}, 45:26--34, 2017.

\bibitem{Abu-KhzamCESW17}
F.~N. Abu{-}Khzam, S.~Cai, J.~Egan, P.~Shaw, and K.~Wang.
\newblock Turbo-charging dominating set with an {FPT} subroutine: Further
  improvements and experimental analysis.
\newblock In {\em Theory and Applications of Models of Computation - 14th
  Annual Conference, {TAMC} 2017, Bern, Switzerland, April 20-22, 2017,
  Proceedings}, pages 59--70, 2017.

\bibitem{abu14}
F.~N. Abu{-}Khzam, J.~Egan, M.~R. Fellows, F.~A. Rosamond, and P.~Shaw.
\newblock On the parameterized complexity of dynamic problems with connectivity
  constraints.
\newblock In Z.~Zhang, L.~Wu, W.~Xu, and D.~Du, editors, {\em Combinatorial
  Optimization and Applications - 8th International Conference, {COCOA} 2014,
  Wailea, Maui, HI, USA, December 19-21, 2014, Proceedings}, volume 8881 of
  {\em Lecture Notes in Computer Science}, pages 625--636. Springer, 2014.

\bibitem{abu2015parameterized}
F.~N. Abu{-}Khzam, J.~Egan, M.~R. Fellows, F.~A. Rosamond, and P.~Shaw.
\newblock On the parameterized complexity of dynamic problems.
\newblock {\em Theor. Comput. Sci.}, 607:426--434, 2015.

\bibitem{rangecomp2003}
R.~H. Andrea Di~Blas, Arun~Jagota.
\newblock A range-compaction heuristic for graph coloring.
\newblock {\em Journal of Heuristics}, 9:489--506, 2003.

\bibitem{fourColoring}
K.~Appel and W.~Haken.
\newblock The solution of the four-color-map problem.
\newblock {\em Scientific American}, 237(4):108--121, 1977.

\bibitem{Bjorklund2009}
A.~Bj{\"{o}}rklund, T.~Husfeldt, and M.~Koivisto.
\newblock Set partitioning via inclusion-exclusion.
\newblock {\em {SIAM} J. Comput.}, 39(2):546--563, 2009.

\bibitem{simAnn}
M.~Chams, A.~Hertz, and D.~de~Werra.
\newblock Some experiments with simulated annealing for coloring graphs.
\newblock {\em European Journal of Operational Research}, 32:260--266, 1987.

\bibitem{Chen10}
J.~Chen, I.~A. Kanj, and G.~Xia.
\newblock Improved upper bounds for vertex cover.
\newblock {\em Theor. Comput. Sci.}, 411(40-42):3736--3756, 2010.

\bibitem{CL96}
J.~C. Culberson and F.~Luo.
\newblock Exploring the k-colorable landscape with iterated greedy.
\newblock pages 245--284, 1995.

\bibitem{downey2014dynamic}
R.~G. Downey, J.~Egan, M.~R. Fellows, F.~A. Rosamond, and P.~Shaw.
\newblock Dynamic dominating set and turbo-charging greedy heuristics.
\newblock {\em Tsinghua Science and Technology}, 19(4):329--337, Aug 2014.

\bibitem{downey1999parameterized}
R.~G. Downey and M.~R. Fellows.
\newblock {\em Parameterized Complexity}.
\newblock Monographs in Computer Science. Springer, 1999.

\bibitem{npHard1979}
M.~R. Garey and D.~S. Johnson.
\newblock {\em Computers and Intractability: A Guide to the Theory of
  NP-Completeness}.
\newblock W.H. Freeman, 1979.

\bibitem{npHard1974}
M.~R. Garey, D.~S. Johnson, and L.~Stockmeyer.
\newblock Some simplified np-complete problems.
\newblock In {\em Proceedings of the Sixth Annual ACM Symposium on Theory of
  Computing}, STOC '74, pages 47--63, New York, NY, USA, 1974. ACM.

\bibitem{HartungN13}
S.~Hartung and R.~Niedermeier.
\newblock Incremental list coloring of graphs, parameterized by conservation.
\newblock {\em Theor. Comput. Sci.}, 494:86--98, 2013.

\bibitem{HdW87}
A.~Hertz and D.~de~Werra.
\newblock Using tabu search techniques for graph coloring.
\newblock {\em Computing}, 39:345--351, 1987.

\bibitem{Krithika2018}
R.~Krithika, A.~Sahu, and P.~Tale.
\newblock Dynamic parameterized problems.
\newblock {\em Algorithmica}, 80(9):2637--2655, Sep 2018.

\bibitem{DBLP:books/sp/Lewis16}
R.~M.~R. Lewis.
\newblock {\em A Guide to Graph Colouring - Algorithms and Applications}.
\newblock Springer, 2016.

\bibitem{LMNN2018}
J.~Luo, H.~Molter, A.~Nichterlein, and R.~Niedermeier.
\newblock Parameterized dynamic cluster editing.
\newblock {\em CoRR}, abs/1810.06625, 2018.

\bibitem{gInter}
D.~W. Matula, G.~Marble, and J.~D. Isaacson.
\newblock Graph coloring algorithms.
\newblock {\em Graph Theory and Computing}, pages 109--122, 1972.

\bibitem{naomi}
N.~Nishimura.
\newblock Introduction to reconfiguration.
\newblock {\em Algorithms}, 11(4):52, 2018.

\bibitem{gLarFst}
D.~J.~A. Welsh and M.~B. Powell.
\newblock An upper bound for the chromatic number of a graph and its
  application to timetabling problems.
\newblock {\em The Computer Journal}, 10:85--86, 1967.

\end{thebibliography}

\end{document}